\setlist{itemsep=0pt,topsep=2pt}
\newcommand{\R}{\mathbb{R}}
\newcommand{\N}{\mathbb{N}}
\newcommand{\C}{\mathbb{C}}
\newcommand{\Z}{\mathbb{Z}}
\newcommand{\ket}[1]{|#1\rangle}
\newcommand{\bra}[1]{\langle#1|}
\newcommand{\braket}[2]{\langle#1|#2\rangle}
\newcommand{\ketbras}[1]{|#1\rangle\langle#1|}
\DeclarePairedDelimiter{\set}{\lbrace}{\rbrace}
\DeclarePairedDelimiter{\abs}{\lvert}{\rvert}
\DeclarePairedDelimiter{\ceil}{\lceil}{\rceil}
\newcommand{\mc}[1]{\mathcal{#1}}
\newcommand{\id}{\ensuremath{\mathop{\rm Id}\nolimits}}
\newcommand{\ie}{\textit{i.e.}}
\newcommand{\eg}{\emph{e.g.}}
\newcommand{\ve}{\varepsilon}
\newcommand{\bs}{\mc{S}}      
\newcommand{\inA}{\mc{I}_A}   
\newcommand{\inB}{\mc{I}_B}   
\newcommand{\outA}{\mc{O}_A}  
\newcommand{\outB}{\mc{O}_B}  
\newcommand{\gh}{H}  
\newcommand{\ZZ}{\mathcal{U}}  
\newcommand{\ZO}{\mathcal{U}'}
\newcommand{\OZ}{\mathcal{P}}
\newcommand{\OO}{\mathcal{P}'}
\newcommand{\ZA}{A} 
\newcommand{\ZB}{B} 
\newcommand{\OA}{A'} 
\newcommand{\OB}{B'} 
\newcommand{\Ginf}{G^\delta_\infty}
\newcommand{\Vinf}{V^\delta_\infty}
\newcommand{\bips}[2]{\C^{#1}\otimes\C^{#2}}
\newcommand{\ca}{A}
\newcommand{\caa}{A'}
\newcommand{\cb}{B}
\newcommand{\cbb}{B'}
\newcommand{\cx}{X}
\newcommand{\cy}{Y}
\newcommand{\eps}{\varepsilon}
\DeclareMathOperator{\tr}{Tr}
\newcommand{\Thm}[1]{\hyperref[thm:#1]{Theorem~\ref*{thm:#1}}}
\newcommand{\Lem}[1]{\hyperref[lem:#1]{Lemma~\ref*{lem:#1}}}
\newcommand{\Cor}[1]{\hyperref[cor:#1]{Corollary~\ref*{cor:#1}}}
\newcommand{\Def}[1]{\hyperref[def:#1]{Definition~\ref*{def:#1}}}
\newcommand{\Obs}[1]{\hyperref[obs:#1]{Observation~\ref*{obs:#1}}}
\newcommand{\Sect}[1]{\hyperref[sec:#1]{Section~\ref*{sec:#1}}}
\newcommand{\Fig}[1]{\hyperref[fig:#1]{Figure~\ref*{fig:#1}}}
\newcommand{\Tab}[1]{\hyperref[tab:#1]{Table~\ref*{tab:#1}}}
\newcommand{\EqRef}[1]{\hyperref[eq:#1]{(\ref*{eq:#1})}}
\newcommand{\Eq}[1]{Equation~\hyperref[eq:#1]{(\ref*{eq:#1})}}
\newtheorem{theorem}{Theorem}
\newtheorem{lemma}[theorem]{Lemma}
\newtheorem{cor}[theorem]{Corollary}
\newtheorem*{example*}{Example}
\theoremstyle{definition}
\title{Unbounded entanglement in nonlocal games}
\author{Laura Man\v{c}inska\thanks{Centre for Quantum Technologies, National University of Singapore. Email: \texttt{laura@locc.la}} 
\and 
Thomas Vidick\thanks{Simons Institute, University of California at Berkeley. Email: \texttt{vidick@csail.mit.edu}}}
\begin{document}
\maketitle

\begin{abstract}
Quantum entanglement is known to provide a strong advantage in many two-party distributed tasks. We investigate the question of how much entanglement is needed to reach optimal performance. For the first time we show that there exists a purely classical scenario for which no finite amount of entanglement suffices. To this end we introduce a simple two-party nonlocal game $H$, inspired by Lucien Hardy's paradox. In our game each player has only two possible questions and can provide bit strings of any finite length as answer. We exhibit a sequence of strategies which use entangled states in increasing dimension $d$ and succeed with probability $1-O(d^{-c})$ for some $c\geq 0.13$. On the other hand, we show that any strategy using an entangled state of local dimension $d$ has success probability at most $1-\Omega(d^{-2})$. In addition, we show that any strategy restricted to producing answers in a set of cardinality at most $d$ has success probability at most $1-\Omega(d^{-2})$. Finally, we generalize our construction to derive similar results starting from any game $G$ with two questions per player and finite answers sets in which quantum strategies have an advantage. 
\end{abstract}

\section{Introduction}

Entanglement plays a key role in quantum information processing. The almost unnaturally strong correlations it implies were initially seen as a weird, if not undesirable~\cite{epr}, feature of quantum mechanics. Yet more recently entanglement is increasingly being thought of as a distributed resource that allows cooperating parties to accomplish otherwise impossible tasks, such as unconditionally secure cryptography~\cite{Ekert91}, randomness certification~\cite{Colbeck09,Pironio} and expansion~\cite{VV12,CoudronY13infinite}, or classical communication with improved efficiency~\cite{BennettW92superdense}. All these tasks are purely classical scenarios in which the use of shared entanglement provides a strong advantage. It is thus natural to ask how much of this new resource is needed in order to achieve optimal performance. The question arises in areas such as quantum Shannon theory~\cite{Wilde11}, communication complexity~\cite{Buhrman10}, nonlocality~\cite{Brunner13Bell}, and many others. Perhaps surprisingly, very few general results are known. Specific examples have been constructed to establish  lower bounds on the amount of entanglement required. Yet the problem of proving upper bounds is a recurrent sticking point and few such bounds are known; two-player XOR games provide a rare exception~\cite{Cleve04}. 

This unfortunate state of affairs has led to the question of whether such bounds exist in principle. The question can be abstractly formulated as follows:\\

\parbox{0.9\textwidth}{\em
Is it possible to associate an integer $d(\mathcal{S})\in\Z_+$ to any finite size $\mathcal{S}\in\Z_+$ so that any distributed task $\mathcal{T}$ of size $\mathcal{S}$ can be completed with vanishing error probability using entangled states of local dimension at most~$d(\mathcal{S})$?}
\hspace*{0.5cm}\emph{$(\star)$}\\[0.2cm]

We may also consider a less demanding finite-precision variant of the above question. That is, we could require to achieve $\mathcal{T}$ \emph{approximately}, within some finite precision $\eps$, and allow $d=d(\mathcal{S},\eps)$ to depend on $\eps$ as well. (For now we are purposefully leaving terms such as complexity, precision, etc., loosely defined; they will be made more concrete in the coming paragraphs.)

We study both questions in the context of nonlocal games. In such games, two separated parties are provided with questions $x$ and $y$ respectively, chosen according to a pre-specified distribution $\pi(x,y)$. Without communicating they must respectively provide answers $a$ and $b$. The task $\mathcal{T}$ is to maximize the probability that their answers satisfy a pre-determined criterion $V(a,b|x,y)=1$. The entangled value, $\omega^*(G)$, of such a game $G=(V,\pi)$ is the largest success probability achievable using finite-dimensional entangled states (see \Sect{Prelim} for precise definitions).
It is known that for any $d\in\Z^+$ there exists a nonlocal game for which any strategy attaining the optimal success probability requires measurements on a shared state with local dimension at least $d$~\cite{Bar08, Wehner08, Briet11}. Thus the upper bound $d(\mathcal{S})$ on the entanglement needed cannot be a universal constant, \ie, its value must depend on some measure of complexity of the game. 

\paragraph{Our results.} We introduce a nonlocal game $\gh$ for which the answer to question $(\star)$ is negative. While $\omega^*(H)=1$, we show that this success probability can only be achieved in the limit of strategies using entangled states of increasing dimension. More precisely, the game $\gh$ is such that for any $\ve>0$,
\begin{itemize}
\item $\gh$ can be won with probability $1-\ve$ using a shared state of local dimension $O(1/\ve^{7.3...})$;
\item any strategy that wins $H$ with probability $1-\ve$ uses a state of local dimension $\Omega(1/\sqrt{\ve})$.
\end{itemize} 
See \Lem{Strategy} and \Thm{Finite} for precise statements. The game $\gh$ is inspired by Hardy's paradox \cite{Hardy92,Hardy93}; it has two questions per party and countably infinite answer sets (see \Sect{Paradox} for a brief review of Hardy's paradox, and \Sect{Game} for a complete description of the game). We also show that in order to win $H$ with probability $1-\ve$ the quantum players must use a strategy that assigns positive probability to at least $\Omega(1/\sqrt{\eps})$ distinct answers per party (see \Thm{Answers}). Finally, starting from any two-question game $G$ with finite answer set, we construct games $\Ginf$ with the same features as the Hardy's game, as long as $0 < \delta < \omega(G)-\omega^*(G)$ (see \Sect{General}).

Our result has some bearing on the computational complexity of computing the entangled value $\omega^*(G)$ of a general nonlocal game $G$. If the input size is defined to be the total number of questions and answers in $G$, then $\omega^*(G)$ is known to be NP-hard to compute~\cite{ItoKM09}. 
The problem is not known to be in NP however. In fact, no non-trivial upper bounds on its complexity are currently known---it is even unknown whether it can be decided if $\omega^*(G)=1$. 
The only known decidability result is for a related parameter $\omega^*_{FV}(G)$, the field-theoretic value of $G$. Here the tensor-product condition on the players' strategy is relaxed to a commutativity requirement (due to this relaxation, $\omega^*(G) \le \omega_{FV}^*(G)$ for any game $G$). 
The parameter $\omega^*_{FV}$ is computable provided that the optimal commuting strategy is finite-dimensional \cite{Wehner08,Navascues08}. In such a case $\omega^*(G) = \omega_{FV}^*(G)$ since any finite-dimensional commuting strategy can be used to construct a tensor-product strategy with the same success probability.
Equality in the general case would follow from a positive resolution of Tsirelson's conjecture (see \eg~\cite{Fritz12} for a formulation and discussion of the conjecture).

Our results concerning the game $\gh$ show that no a priori upper bound on the dimension of optimal strategies exists for games with finite question and countable answer sets. Hence, in this case the ``na\"ive'' algorithm that computes $\omega^*(G)$ by performing an exhaustive search over all possible strategies in increasing dimension (and to within increasing precision) will keep finding strategies with increasing success probability. Of course, in practice one will be content with a good approximation to $\omega^*(G)$. In this case one can interpret our result as placing a lower bound, depending on the desired approximation, on the dimension in which the search must be performed. The game $H$ can thus be used as a ``dimension witness'', \emph{for any dimension}: strategies achieving success probability at least $\omega^*(H)-\eps$ must necessarily use entanglement of local dimension $d=\Omega(1/\sqrt{\eps})$. Examples of similar constructions are already known \cite{Brunner08, Slofstra11, Briet11}, but they all require using different games in order to witness increasing dimensions. In particular, for all the known constructions the number of questions increases along with the dimension to be certified.
For instance, 
Slofstra~\cite{Slofstra11} provides an $n$-question two-answer XOR game $G_n$ for which achieving $\omega^*(G)-\eps$ requires dimension $\min(2^{\Omega(\sqrt{n})},\Omega(\eps^{-1/12}))$. Bri\"et et al.~\cite{Briet11} provide an $(1/\eps)^{\Theta(1/\eps)}$-question two-answer XOR game $G_\eps$ for which the dependence on $\eps$ is $\Omega(1/\eps)$. When compared to these constructions, our game has the drawback of having infinitely many possible answers.
To rectify this, we can limit our game to answers of length $\ell=\Theta(\log 1/\eps)$. 
This gives a family of games $H_\ell$ with two questions and $\text{poly}(1/\eps)$ answers. To win $H_\ell$ with probability $\omega^*(H_\ell)-\eps$, a state of dimension at least $\Omega(1/\sqrt{\eps})$ is needed (see \Cor{Witness}).

\paragraph{Related work.}
Prior to our work two examples of nonlocal games were known for which the optimal success probability can only be approached in the limit of infinite-dimensional shared entanglement. Leung et al.~\cite{Leung08} consider a game with quantum questions and answers, which has inspired  Regev and Vidick~\cite{Regev12} to produce a game with quantum questions but classical answers. The latter game has two possible $(3\times 3)$-dimensional states as questions, and two possible answers per player. Our game provides the first example with classical questions and answers, thus resolving a question first formally asked in~\cite{Navascues08}. The same question was also asked~\cite{Wehner08}, but specifically for games with finite question and answer sets. Although there is some numerical evidence that infinite-dimensional entanglement may be needed in that case as well~\cite{Pal10}, the question remains tantalizingly open.

\paragraph{Organization of the paper.}
The remainder of the paper is organized as follows. In \Sect{Prelim} we formally define the classical and entangled values of a nonlocal game. In \Sect{Hardy} we review Hardy's paradox, introduce the game $H$, and show that the classical value of $H$ is $3/4$ while its entangled value equals $1$. We also give a more general construction which yields a game with similar properties to Hardy's, but starting from any game with two questions per player and finite answer sets. In \Sect{Finite} we establish our main result by showing that restricting the dimension of shared entanglement bounds the achievable success probabilities away from one (see~\Thm{Finite}). We also show that restricting the number of different answers has the same effect (see~\Thm{Answers}). We conclude in \Sect{Conclusion}.

\section{Preliminaries}
\label{sec:Prelim}
In this section we explain the terminology used to discuss one-round two-player nonlocal games. A reader familiar with nonlocal games is encouraged to proceed directly to the next section.

A two-party nonlocal game $G$ consists of a probability distribution $\pi$ over a set of the form $\inA \times \inB$, called the set of questions; sets of answers $\mc{O}_A$ and $\mc{O}_B$, and a verification function $V: \mc{O}_A \times \mc{O}_B \times \mc{I}_A \times \mc{I}_B  \to \set{0,1}$. 
With probability $\pi(x,y)$ the referee sends the two players, traditionally called Alice and Bob, questions $x\in\inA$ and $y\in\inB$, respectively. Without communicating, the players must produce answers $a\in\outA$ and $b\in\outB$, respectively. They win if $V(a,b,x,y)=:V(a,b|x,y)=1$ and lose otherwise. Classical players can use shared randomness to possibly enhance their strategy while quantum players can use shared entanglement. The goal of the players is to maximize their probability of winning. In case of classical players, we call this probability the \emph{(classical) value} of game G, denoted $\omega(G)$. In the quantum case we call it the \emph{entangled value} of $G$, denoted $\omega^*(G)$. Since quantum players are at least as powerful as classical ones, $\omega(G)\leq\omega^*(G)$ for any game $G$. 

\subsection{\large Classical strategies and value}
 
Any classical strategy for a game $G$ can be specified using a pair of  functions, 
\begin{equation}
  \alpha : \inA \times R \to \outA \;\;\text{ and }\;\;
  \beta  : \inB \times R \to \outB,
\end{equation}
where $R$ consists of the possible values of shared randomness (which, without loss of generality, also includes any private randomness). We define $\alpha(x,r)$ to be Alice's answer on question $x$ given that the shared randomness takes value $r$. We define $\beta(y,r)$ similarly. If the shared randomness is distributed according to $\tau: R \to [0,1]$, then the classical value of the game is
\begin{equation}
  \omega(G) := \sup_{\alpha,\beta} \omega(G|\alpha,\beta) 
  :=  \sup_{\alpha,\beta} \sum_{r} \tau(r) 
  \sum_{x,y} 
  \pi(x,y) V\Big(\alpha_r(x),\beta_r(y)|x,y\Big),
\end{equation}
where  $\alpha_r(x):=\alpha(x,r)$ and $\beta_r(y):=\beta(y,r)$. Note that for any $r\in R$ the pair $(\alpha_r,\beta_r)$ specifies a deterministic strategy. Since $\omega(G|\alpha,\beta)$ is a convex combination of $\omega(G|\alpha_r,\beta_r)$, there exists some $r\in R$ for which $\omega(G|\alpha_r,\beta_r) \geq \omega(G|\alpha,\beta)$. Therefore, it suffices to consider only deterministic strategies $(\alpha,\beta)$, \ie,
\begin{equation}
  \omega(G) = \sup_{\substack{\alpha: \inA \to \outA \\
                             \beta: \inB \to \outB}} \;
  \sum_{x,y} 
  \pi(x,y) V\big(\alpha(x),\beta(y)|x,y\big).
\label{eq:CStrat}
\end{equation}
The following two lemmas show that if a game either has finitely many questions, or finitely many answers, then the classical value can always be achieved exactly, \ie, the supremum in~\eqref{eq:CStrat} can be replaced with a maximum.

\begin{lemma}
Let $G=(V,\pi)$ be a two-party game with finite question sets $\inA$, $\inB$ and (arbitrary) answer sets $\outA,\outB$. Then the classical value of $G$ is always achieved by some deterministic strategy, \ie, $\omega(G) = \omega(G|\alpha,\beta)$ for some $\alpha: \inA \to \outA$ and $\beta: \inB \to \outB$.
\end{lemma}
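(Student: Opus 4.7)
The approach is to reduce the optimization over the (possibly uncountable) space of deterministic strategies to optimization over a finite set of effective payoff patterns. The key observation is that since $\inA$ and $\inB$ are finite and $V$ takes values in $\{0,1\}$, the winning probability $\omega(G|\alpha,\beta)$ of any deterministic strategy $(\alpha,\beta)$ depends on $(\alpha,\beta)$ only through the $\{0,1\}$-matrix
\[ M_{\alpha,\beta}(x,y) \;:=\; V\bigl(\alpha(x),\beta(y)\,\big|\,x,y\bigr), \qquad (x,y) \in \inA \times \inB, \]
via $\omega(G|\alpha,\beta) = \sum_{x,y} \pi(x,y)\, M_{\alpha,\beta}(x,y)$.

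First, I would observe that the set of matrices actually realized, $\mathcal{M} := \{M_{\alpha,\beta} : \alpha \colon \inA \to \outA,\; \beta \colon \inB \to \outB\}$, is a subset of $\{0,1\}^{\inA \times \inB}$ and is therefore finite, of cardinality at most $2^{|\inA|\cdot|\inB|}$. Second, I would conclude that the image of the map $(\alpha,\beta) \mapsto \omega(G|\alpha,\beta)$ is a finite subset of $[0,1]$, so the supremum in the expression \eqref{eq:CStrat} for $\omega(G)$ is actually a maximum. Any pair $(\alpha^*,\beta^*)$ whose associated matrix $M_{\alpha^*,\beta^*}$ is a maximizer in $\mathcal{M}$ then gives the desired deterministic strategy.

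I do not anticipate a real obstacle: the argument ultimately reduces to the triviality that a real-valued function with finite image attains its supremum. The one subtlety worth flagging is that the answer sets $\outA,\outB$ are allowed to be arbitrary (in particular, uncountable), so a naive compactness or topological argument applied directly to the space of strategy pairs is not available; quotienting by the realized payoff matrix bypasses this concern cleanly, and uses nothing beyond the finiteness of $\inA\times\inB$ and the Boolean-valuedness of $V$.
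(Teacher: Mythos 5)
Your proposal is correct and is essentially the paper's own argument: both observe that, since $\inA\times\inB$ is finite and $V$ is $\{0,1\}$-valued, the success probability of a deterministic strategy can take only finitely many values, so the supremum in \eqref{eq:CStrat} is attained. Quotienting by the realized payoff matrix $M_{\alpha,\beta}$ is just a slightly more explicit packaging of the same finiteness observation.
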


\begin{proof}
Consider a two-party game $G=(V,\pi)$ with finite question sets.
We have already argued that it suffices to consider only deterministic strategies to find the classical value $\omega(G)$. Therefore, it only remains to show that some deterministic strategy $(\alpha,\beta)$ achieves success probability $\omega(G)$, \ie, $\omega(G) = \omega(G | \alpha, \beta)$.
Since $V(a,b|x,y)\in\set{0,1}$ and the sets $\inA,\inB$ are finite, the sum in \Eq{CStrat} can only take on a finite number of different values, each of which is  attained by a particular pair of strategies. Therefore, the supremum can be replaced with the maximum and the value of $G$ is achieved by some deterministic strategy. 
\end{proof}

\begin{lemma}
Let $G=(V,\pi)$ be a 2-party game with countably infinite question sets $\inA$, $\inB$ and finite answer sets $\outA,\outB$. Then the classical value of $G$ is always achieved by some deterministic strategy.
\label{lem:InfiniteInput}
\end{lemma}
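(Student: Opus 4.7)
The idea is a standard compactness/diagonalization argument, exploiting the fact that the answer sets are finite while the question sets are only countable. The strategy space $\outA^{\inA}\times\outB^{\inB}$ is compact in the product topology by Tychonoff's theorem, since each factor $\outA$ and $\outB$ is a finite discrete space. For the objective to attain its supremum it suffices to check that $(\alpha,\beta)\mapsto\omega(G|\alpha,\beta)$ is upper semicontinuous in this topology; this in turn follows from the fact that the probability distribution $\pi$ has total mass $1$, so tails of the defining sum can be made arbitrarily small uniformly in $(\alpha,\beta)$.

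Rather than invoking general topology, I would give the argument concretely. Enumerate $\inA=\{x_1,x_2,\ldots\}$ and $\inB=\{y_1,y_2,\ldots\}$, and pick a sequence of deterministic strategies $(\alpha^{(n)},\beta^{(n)})$ with $\omega(G|\alpha^{(n)},\beta^{(n)})\to\omega(G)$. Since $\outA$ is finite, some value $a_1\in\outA$ is attained by $\alpha^{(n)}(x_1)$ for infinitely many $n$; pass to that subsequence. Repeat for $x_2,x_3,\ldots$, then do likewise for $\beta^{(n)}$ on $y_1,y_2,\ldots$, and take a diagonal subsequence (which I will still denote $(\alpha^{(n)},\beta^{(n)})$) so that for every $k$, $\alpha^{(n)}(x_k)$ and $\beta^{(n)}(y_k)$ are eventually constant in $n$. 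Define $\alpha(x_k)$ and $\beta(y_k)$ to be these limiting values.

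It then remains to check $\omega(G|\alpha,\beta)=\omega(G)$. Fix $\varepsilon>0$ and choose a finite subset $F\subseteq \inA\times\inB$ with $\sum_{(x,y)\in F}\pi(x,y)\geq 1-\varepsilon$. By the diagonal construction, there exists $N$ such that for every $n\geq N$ and every $(x,y)\in F$ we have $\alpha^{(n)}(x)=\alpha(x)$ and $\beta^{(n)}(y)=\beta(y)$. Hence for $n\geq N$,
\begin{equation*}
\bigl|\omega(G|\alpha^{(n)},\beta^{(n)})-\omega(G|\alpha,\beta)\bigr|
\;\leq\;\sum_{(x,y)\notin F}\pi(x,y)\bigl|V(\alpha^{(n)}(x),\beta^{(n)}(y)|x,y)-V(\alpha(x),\beta(y)|x,y)\bigr|
\;\leq\;\varepsilon,
\end{equation*}
since the terms in $F$ cancel and each remaining summand is bounded by $1$. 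Letting $n\to\infty$ gives $|\omega(G)-\omega(G|\alpha,\beta)|\leq\varepsilon$, and since $\varepsilon>0$ was arbitrary, $\omega(G|\alpha,\beta)=\omega(G)$.

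The only mildly delicate step is the diagonal extraction, but this is standard once one fixes the enumerations of the (countable) question sets; no step should pose a real obstacle. Note that finiteness of the answer sets is used crucially: without it one could not force $\alpha^{(n)}(x_k)$ to be eventually constant along a subsequence, and the compactness of the strategy space would fail.
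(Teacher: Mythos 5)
Your proof is correct, and it proves the same compactness fact as the paper, but packaged differently. The paper encodes each deterministic strategy as an infinite word over the finite alphabet $\outA\times\outB$, collects the $n$-letter prefixes of the words of a value-maximizing sequence of strategies into a prefix-closed set, and invokes K\"onig's lemma to extract an infinite word (hence a strategy) every prefix of which is realized by some strategy in the sequence. You instead run the equivalent sequential-compactness argument by hand: a nested subsequence extraction over the enumerated questions followed by diagonalization. The two routes are interchangeable here (K\"onig's lemma \emph{is} the compactness of a product of finite discrete spaces, viewed on the prefix tree), so neither buys extra generality. What your write-up does add is an explicit verification of the final step: the paper asserts in one line that the limiting strategy achieves $\omega(G)$ ``because the partial strategies corresponding to prefixes achieve success probabilities arbitrarily close to $\omega(G)$,'' which tacitly uses exactly the tail estimate $\sum_{(x,y)\notin F}\pi(x,y)\le\varepsilon$ that you spell out. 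Making that estimate explicit is where the hypothesis that $\pi$ is a probability distribution (rather than an arbitrary weighting) actually enters, and it is the cleaner way to justify that agreement on finitely many questions controls the value; your identification of finiteness of $\outA,\outB$ as the crucial hypothesis is likewise on target, as the paper's counterexample following the lemma shows.
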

\begin{proof}
Consider a game $G=(V,\pi)$ with question sets $\inA,\inB=\set{1,2,\dotsc}$ and finite answer sets $\outA,\outB$. We can identify any deterministic strategy $(\alpha,\beta)$ with an infinite word 
\begin{equation}
  \big(\alpha(1),\beta(1)\big)\big(\alpha(2),\beta(2)\big)
  \big(\alpha(3),\beta(3)\big)\cdots
\end{equation}
over the finite alphabet $\outA \times \outB$. Consider a sequence of deterministic strategies $(\alpha_n,\beta_n)$ with success probabilities converging to $\omega(G)$. Let $w_n$ be the $n$-letter prefix of the word corresponding to strategy $(\alpha_n,\beta_n)$ and consider the set of words $L=\set{w_n}_n$. K\"onig's lemma states that if $X$ is an infinite prefix-closed set of words over a fixed alphabet, then we can find an infinite word $x$ whose any prefix is shared by some word in $X$ (see \eg{} Proposition~1.2.3 in \cite{Lothaire}). 
Since the set $L$ is prefix-closed by construction, there exists $w$ whose any prefix is shared by some word in $L$.
It now remains to note that the strategy that corresponds to $w$ achieves success probability $\omega(G)$. This is because the partial strategies corresponding to prefixes of $w$ achieve success probabilities arbitrarily close to $\omega(G)$.
\end{proof}

In the case when both the question sets and answer sets are allowed to be infinite the value of the game cannot always be achieved. To see this consider the following game, which was suggested to us by A.~Belovs. The question and answer sets are $\inA,\inB,\outA,\outB :=\N$, $\pi(x,y) = 2^{-x-y}$ and $V(a,b|x,y) = 1$ if and only if $a,b>\max(x,y)$. Clearly no deterministic strategy achieves success probability one. Yet if we define $(\alpha_n,\beta_n)$ via $\alpha(x)=x+n$ and $\beta_n(y)=y+n$, then  $(\alpha_n,\beta_n)$ gives a sequence of strategies with vanishing error. Hence, $\omega(G)=1$ even though no single strategy can be used to win game $G$ perfectly.

\subsection{\large Quantum strategies and value}
A quantum strategy is specified by a finite-dimensional shared state $\ket{\psi}\in\C^{d_A}\otimes\C^{d_B}$, a POVM $\mc{A}^{x} = \set{A_a^{x} : a\in\outA}$ for every question $x\in\inA$ to Alice, and a POVM $\mc{B}^{y} = \set{B_b^{y} : b\in\outB}$ for every question $y\in\inB$ to Bob. Upon receiving questions $x$ and $y$, the parties measure their systems with POVMs $\mc{A}^{x}$, $\mc{B}^{y}$ respectively and answer with their respective measurement outcomes $a$ and~$b$. The dimension of a strategy is the dimension of the shared entangled state it employs. We also use the term ``local dimension'' to refer to $d_A$ or $d_B$, \ie, the dimension of Alice's or Bob's local systems. 

The quantum value of the game is given by
\begin{equation}
  \omega^*(G) = \sup
  \sum_{x,y} \pi(x,y) V(a,b|x,y) \bra{\psi} 
  \big(A_a^x \otimes B_b^y\big) \ket{\psi},
\end{equation}
where the supremum is taken over all finite-dimensional shared states $\ket{\psi}$ and sets of POVMs $\set{\mc{A}^x}_x$ and $\set{\mc{B}^y}_y$.

\section{Hardy's game and its generalization}
\label{sec:Hardy}

We introduce a nonlocal game that we call Hardy's game, as it is based on Hardy's paradox~\cite{Hardy92,Hardy93}. We first describe the paradox, then the game we derive from it, and exhibit a sequence of good strategies for quantum players of this game. We end this section by giving a general construction which yields a nonlocal game with similar properties as Hardy's game, starting from any game $G$ with finite question and answer sets. 

\subsection{Hardy's paradox}
\label{sec:Paradox}

Hardy's paradox is a two-party experimental setup whose outcomes, as predicted by quantum mechanics, cannot be reproduced by any local hidden variables theory. The setup is noteworthy for being minimal in a sense that it involves only two qubits, on which both of the parties perform two measurements with two outcomes each. Our description of the setup roughly follows the explanations given in \cite{Laloe01}.

Suppose Alice and Bob share a two-qubit state
\begin{equation}
 \ket{\psi} :=  \frac{\sin(\theta) \ket{v_0',v_0'}
   -\cos(\theta) \big(\ket{v_0',v_1'}+ \ket{v_1',v_0'} \big)}{\sqrt{1+\cos^2(\theta)}},            
\label{eq:Psi}
\end{equation}
where $\bs'= \set{\ket{v_0'},\ket{v_1'}}\subseteq\R^2$ is any orthonormal basis and $0<\theta<\frac{\pi}{2}$.
Let $\bs = \set{\ket{v_0},\ket{v_1}}\subseteq\R^2$ be the basis obtained by rotating $\bs'$ by angle $2\theta$ in the $xz$ plane of the Bloch sphere, \ie,
\begin{align}
  &\ket{v_0} = \cos(\theta)\ket{v_0'} + \sin(\theta) \ket{v_1'},\\
  &\ket{v_1} = -\sin(\theta)\ket{v_0'} + \cos(\theta) \ket{v_1'}.
\label{eq:Basis}
\end{align}
If Alice and Bob measure the state $\ket{\psi}$ using the bases $\bs$ or $\bs'$ the following conditions are satisfied (see also \Fig{Diagram}):
\begin{enumerate}[label=(\roman*)]
\item if both parties measure in $\bs$, the outcome pair $(0,0)$ occurs with probability $p_\theta>0$ (see \Eq{P} for an exact formula);
\item if one of the parties measures in $\bs$ and the other in $\bs'$, the outcome pair $(0,0)$ never occurs;
\item if both parties measure in $\bs'$, the outcome pair $(1,1)$ never occurs.
\end{enumerate}
\begin{figure}[h!]
\centering
\includegraphics{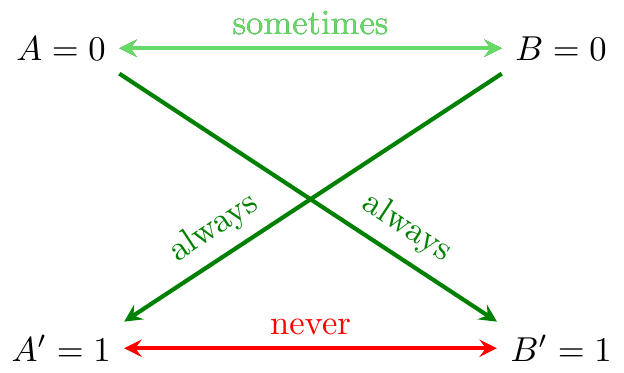}
\caption{\small Summary of correlations between the outcomes of Alice and Bob's measurements in Hardy's setup. For instance, the arrow $\ca=0\to\cbb=1$ marked ``always'' means that, if Alice measures her system in basis $\bs$ and Bob measures his system in basis $\bs'$, then whenever Alice obtains the outcome $0$ Bob always obtains the outcome $1$.}
\label{fig:Diagram}
\end{figure}

To see that condition~(i) is satisfied, we note that
\begin{equation}
  p_{\theta} := \abs{\braket{v_0,v_0}{\psi}}^2 = 
  \frac{\cos^4(\theta) \sin^2(\theta)}{1+\cos^2(\theta)} > 0
\label{eq:P}
\end{equation}
as $0<\theta<\pi$. To verify the other two conditions one can check that $\braket{v_0^{\phantom{'}},v_0'}{\psi}=0$, $\braket{v_0',v_0^{\phantom{'}}}{\psi} = 0$, and $\braket{v_1',v_1'}{\psi}=0$.

Any two-qubit state that is neither product nor maximally entangled can be expressed in the form of \Eq{Psi} for appropriately chosen basis $\bs'$ and angle $\theta$ \cite{Hardy93,Goldstein94}. Therefore, almost any two-qubit state can be used to perform an experiment whose outcomes will satisfy conditions (i)--(iii). Different values of $\theta$ will result in different $p_\theta$; one can verify that the maximum $p_\theta$ is $\tfrac{5\sqrt{5}-11}{2}\approx0.09$ and is attained at $\theta=\arccos\big(\big(\tfrac{\sqrt{5}-1}{2}\big)^{1/2}\big) < \frac{\pi}{4}$.

Local hidden variables theories cannot reproduce the predictions (i)--(iii) made by quantum mechanics. Any such theory assigns definite outcomes (depending only on the hidden variables) for each of the four measurements: $\ca,\caa,\cb$ and $\cbb$. Condition~(i) guarantees that for some setting of the hidden variables the outcomes associated with $A$ and $B$ will both be $0$. According to Condition~(ii) this implies that for the same setting of hidden variables the outcomes associated with $\cbb$ and $\caa$ must both be $1$. This, however, contradicts Condition~(iii).

A different construction that is often used to disprove the existence of non-contextual hidden variables theories are the so-called Kochen-Specker sets \cite{Gleason57, Kochen67}. These sets can easily be turned into a nonlocal game with entangled value $1$ and classical value bounded away from $1$~\cite{Cleve04}. Hardy's paradox in itself does not immediately yield such a game; indeed it is unclear how to enforce Condition (i) in the framework of nonlocal games. In the next section we show how to accommodate with that condition and turn Hardy's paradox into a nonlocal game.

\subsection{A nonlocal game from Hardy's paradox}
\label{sec:Game}
Building on Hardy's paradox, we now propose the following nonlocal game that we call \emph{Hardy's game} and denote as~$\gh$.
In game $\gh$ there are two questions per player: Alice's question set is $\mc{I}_A = \set{\ca,\ca'}$, Bob's question set is $\mc{I}_B = \set{\cb,\cb'}$, and the questions are uniformly distributed, \ie, $\pi(\cx,\cy) = \frac{1}{4}$ for all $\cx\in\inA$, $\cy\in\inB$. The possible answers for both parties are binary strings of arbitrary but finite length, \ie, $\mc{O}_A,\mc{O}_B = \set{0,1}^*$. The verification function $V: \mc{O}_A \times \mc{O}_B \times \mc{I}_A \times \mc{I}_B  \to \set{0,1}$ is defined by $V(a,b|\cx,\cy)=1$ if and only if all of the following conditions are satisfied.
\begin{enumerate}
\item The answer strings $a$ and $b$ have the same length, \ie, $\abs{a} = \abs{b}$.
\item If $\cx=\ca$ and $\cy=\cb$, then $a_i = b_i =0$ for some position $i\in [n]$, where $n:=\abs{a}=\abs{b}$.
\item For each position $i\in[n]$:
\begin{enumerate}
\item if $(\cx,\cy)=(\ca,\cb')$ or $(\cx,\cy) = (\ca',\cb)$, then $a_i=1$ or $b_i=1$;
\item if $(\cx,\cy) = (\ca',\cb')$, then $a_i=0$ or $b_i=0$.
\end{enumerate}
\end{enumerate}
In the above, Condition~3 requires that each pair of answer bits $(a_i,b_i)$ satisfies the last two conditions in Hardy's paradox. Condition~2 requires that there exists a position $i$ for which $(a_i,b_i)$ satisfy the first condition in Hardy's paradox with certainty.

\subsection{The classical and entangled values}
\label{sec:Values}

In this section we determine both the classical value and the entangled value of Hardy's game. 

\begin{lemma}
The classical value of Hardy's game is $\omega(\gh) = 3/4$.
\label{lem:HCvalue}
\end{lemma}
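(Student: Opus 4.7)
The plan is to prove $\omega(\gh) = 3/4$ by matching upper and lower bounds. Since $\gh$ has finite question sets, the earlier lemma lets me restrict attention to deterministic strategies, so Alice is described by two binary strings $a := \alpha(\ca)$ and $a' := \alpha(\caa)$, and Bob by $b := \beta(\cb)$ and $b' := \beta(\cbb)$. Because the four question pairs are equiprobable, it suffices to show (i) no deterministic strategy wins on all four pairs, and (ii) some deterministic strategy wins on three of them.

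For the upper bound, I would suppose for contradiction that $(a,a',b,b')$ wins on all four pairs and replay Hardy's paradox at the bit level. Condition~1 forces all four strings to share a common length $n$. The winning condition for $(\ca,\cb)$ (Condition~2) then produces a position $i \in [n]$ with $a_i = b_i = 0$. Applying Condition~3a to the pair $(\ca,\cbb)$ with $a_i = 0$ forces $b'_i = 1$; applying Condition~3a to $(\caa,\cb)$ with $b_i = 0$ forces $a'_i = 1$; but then Condition~3b applied to $(\caa,\cbb)$ requires $a'_i = 0$ or $b'_i = 0$, a contradiction. Hence at least one pair is lost, and $\omega(\gh) \le 3/4$.

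For the lower bound I would exhibit a deterministic strategy winning three of the four pairs. The cleanest choice is for both players to always output the empty string. Condition~1 then holds trivially, and Conditions~3a and 3b are vacuous because the index set $[n]$ is empty, so the three pairs $(\ca,\cbb)$, $(\caa,\cb)$ and $(\caa,\cbb)$ are all won; only $(\ca,\cb)$ is lost, since Condition~2 demands an index $i$ with $a_i = b_i = 0$ and no such index exists. This gives success probability exactly $3/4$, matching the upper bound.

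I do not expect any real obstacle. The upper bound is simply the transcription of Hardy's three-condition logical contradiction into the bit-string setting, and the lower bound is essentially trivial. The one subtle point is the proper handling of length zero: the empty-string strategy both exploits it (to win three pairs vacuously) and illustrates why Condition~2 is the genuinely binding constraint in the game.
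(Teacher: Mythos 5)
Your proof is correct and follows essentially the same route as the paper: the upper bound is the same bit-level transcription of Hardy's contradiction (a common zero position on $(\ca,\cb)$ forces $a'_i=b'_i=1$, violating the condition on $(\caa,\cbb)$), and the lower bound is an explicit deterministic strategy winning three of the four equiprobable question pairs. The only difference is the witness: the paper uses the one-bit strategy $\alpha(\ca)=\beta(\cb)=0$, $\alpha(\caa)=\beta(\cbb)=1$, which loses only on $(\caa,\cbb)$, whereas your empty-string strategy loses only on $(\ca,\cb)$; both are valid since $\outA=\outB=\set{0,1}^*$ contains the empty string.
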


\begin{proof}
The value of Hardy's game is given by
\begin{equation}
  \omega(\gh) = \max_{\alpha,\beta} 
  \sum_{X\in\set{\ca,\caa}}\sum_{Y\in\set{\cb,\cbb}}
  \frac{1}{4} V\big(\alpha(X),\beta(Y)| X,Y\big)
\end{equation}
and thus $\omega(\gh)\in\set{0,\frac{1}{4},\frac{1}{2},\frac{3}{4},1}$.
In fact, $\omega(\gh)\ge \frac{3}{4}$, since a strategy defined via $\alpha(\ca) = \beta(\cb) = 0$ and $\alpha(\caa) = \beta(\cbb) = 1$ achieves a success probability of $\frac{3}{4}$. To complete the proof it remains to rule out the existence of a perfect deterministic strategy. For contradiction, assume that $\omega(\gh|\alpha,\beta) = 1$ for some deterministic strategy $(\alpha,\beta)$. Since the strategy is perfect, there exists $i$ such that the $i$th bit of the strings $\alpha(\ca)$ and $\beta(\cb)$ is zero. This implies that the $i$th bit of both strings $\alpha(\caa)$ and $\beta(\cbb)$ equals one. Therefore, the players lose on question $(\caa,\cbb)$ which gives the desired contradiction.
\end{proof}

\begin{lemma}
The entangled value of Hardy's game is $\omega^*(\gh) = 1$. Furthermore, for any $\eps>0$ there exists a quantum strategy that succeeds with probability at least $1-\eps$ using only answers of length $\ell=\Theta(\log(1/\eps))$ and entangled state of local dimension $d = \Theta(\eps^{1/\log c}) = \Theta(1/\eps^{7.35\ldots})$, where $c=\tfrac{13-5\sqrt{5}}{2}$.
\label{lem:Strategy}
\end{lemma}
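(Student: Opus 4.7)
The plan is to build an explicit quantum strategy by taking $\ell$ independent copies of the Hardy state from \Sect{Paradox} and playing each copy as in the paradox. Concretely, I would pick $\theta=\theta^{*}:=\arccos\bigl(\bigl(\tfrac{\sqrt{5}-1}{2}\bigr)^{1/2}\bigr)$ so that $p_\theta$ attains its maximum $p^{*}:=\tfrac{5\sqrt{5}-11}{2}=1-c$. Alice and Bob share $\ket{\psi}^{\otimes\ell}$, where $\ket{\psi}$ is the corresponding state of \Eq{Psi}. On question $\cx\in\{\ca,\caa\}$, Alice measures each of her $\ell$ qubits in basis $\bs$ (if $\cx=\ca$) or $\bs'$ (if $\cx=\caa$), and answers with the length-$\ell$ string of outcomes; Bob proceeds symmetrically on $\cy\in\{\cb,\cbb\}$. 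In particular both answer strings have length exactly $\ell$, so Condition~1 in the definition of $\gh$ is satisfied with probability~$1$, and the local dimension of the strategy is $d=2^\ell$.

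The next step is to analyze the remaining three conditions case by case, exploiting the independence of the $\ell$ copies together with Hardy's predictions (i)--(iii). On input $(\ca,\cbb)$ or $(\caa,\cb)$, the identities $\braket{v_0,v_0'}{\psi}=\braket{v_0',v_0}{\psi}=0$ rule out the outcome $(0,0)$ on \emph{every} copy, so for each $i$ the pair $(a_i,b_i)$ satisfies Condition~3(a) with probability~$1$. Similarly on $(\caa,\cbb)$ the identity $\braket{v_1',v_1'}{\psi}=0$ forbids $(1,1)$ on every copy, so Condition~3(b) holds deterministically. The only case with nonzero losing probability is $(\ca,\cb)$: here Condition~2 fails precisely when none of the $\ell$ copies yields the joint outcome $(0,0)$, an event of probability exactly $(1-p^{*})^\ell=c^\ell$ by independence. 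Averaging the four equally likely question pairs, the total losing probability is therefore at most $c^\ell/4$.

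Finally, I would pick $\ell=\lceil\log(4/\eps)/\log(1/c)\rceil=\Theta(\log(1/\eps))$ to push this below $\eps$; this yields answers of length $\ell=\Theta(\log(1/\eps))$ and local dimension $d=2^\ell=\Theta\bigl((1/\eps)^{1/\log_2(1/c)}\bigr)$, where numerically $1/\log_2(1/c)=7.35\ldots$ for $c=\tfrac{13-5\sqrt{5}}{2}$. Letting $\eps\to 0$ also yields $\omega^{*}(\gh)=1$. There is no substantive obstacle here; the only thing to appreciate is that the asymmetry between the ``exists $i$'' phrasing of Condition~2 and the ``for all $i$'' phrasing of Condition~3 matches exactly the asymmetry between Hardy's probabilistic prediction~(i) and his deterministic predictions~(ii)--(iii), so that tensoring $\ell$ Hardy states amplifies the former while preserving the latter.
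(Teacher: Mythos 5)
Your proposal is correct and follows essentially the same route as the paper: share $\ell$ copies of the Hardy state at the optimal angle $\theta^*$, measure each copy in $\bs$ or $\bs'$ according to the question, observe that Conditions 1 and 3 hold deterministically while Condition 2 fails only on question $(\ca,\cb)$ with probability $c^\ell$, and choose $\ell=\Theta(\log(1/\eps))$. The only (immaterial) difference is that you carry the factor $\tfrac14$ from the uniform question distribution into your choice of $\ell$, whereas the paper absorbs it.
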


\begin{proof}
For each $n\in\N$ consider a strategy $\mathfrak{S}_n$ in which players share $n$ copies of the state $\ket{\psi}$ defined in~\Eq{Psi}, for some value of $\theta$. To produce an $n$-bit answer string they measure each of the $n$ copies in basis $\bs = \set{\ket{v_0},\ket{v_1}}$ upon receiving an unprimed question and in basis $\bs'$ (see \Eq{Basis}) upon receiving a primed question. To analyze the success probability of strategy $\mathfrak{S}_n$ recall the three conditions from \Sect{Game}. Since both players answer strings of length $n$, Condition~1 is always satisfied. As discussed in \Sect{Paradox}, the outcomes obtained by measuring $\ket{\psi}$ in basis $\bs$ and $\bs'$ satisfy the conditions from Hardy's paradox. This implies that Alice's and Bob's answer bits $(a_i,b_i)$ always satisfy Condition~3. Finally, note that upon question $(\ca,\cb)$ we have $a_i=b_i=0$ with probability $p_\theta$ (see \Eq{P}). Hence, Condition~2 is satisfied with probability $1-(1-p_\theta)^n$. If $\theta=\arccos\big(\big(\tfrac{\sqrt{5}-1}{2}\big)^{1/2}\big)$, then $\mathfrak{S}_n$ errs with probability $\tfrac{1}{4}\big(\tfrac{13-5\sqrt{5}}{2}\big)^n$. Let $c=\tfrac{13-5\sqrt{5}}{2}\approx0.91<1$, fix any $\eps>0$ and consider $n=\ceil[\big]{\tfrac{\log \eps}{\log c}} = \Theta(\log(1/\eps))$. Then $\mathfrak{S}_n$ errs with probability at most $\eps$, answers strings of length $\Theta(\log(1/\eps))$ and uses entanglement of local dimension $d=2^n = \Theta(\eps^{1/\log c})$.  
\end{proof}

\subsection{A general construction}\label{sec:General}

The construction we used to define Hardy's game from Hardy's paradox can also be applied when starting from any two-player game $G$ with finite question and answer sets. For any such game $G=(V,\pi)$ and any $\delta>0$ we define a game $\Ginf=(V_\infty^\delta,\pi)$. Results similar to those in \Sect{Values} can be obtained for any of the defined games. Yet we focus on the games $\Ginf$ arising from games $G$ with two questions per player, denoted $A,A'$ and $B,B'$ respectively, and $\delta<\omega^*(G)-\omega(G)$, since our main contribution in \Sect{Finite} only applies in this case. For example, our construction can be applied to the CHSH game, to obtain CHSH$_\infty^\delta$ for any $0<\delta<\cos^2(\frac{\pi}{8})-\frac{3}{4}$.

Fix some entangled strategy that succeeds at $G$ with probability $\omega^*(G)-\frac{\delta}{2}$ and let $\omega^*_{XY}$ be its success probability conditioned on questions $(X,Y)$ being asked. The construction of the game $\Ginf$ is very simple. In the game, the verifier first chooses questions $(X,Y)\in\set{\ZA,\OA}\times\set{\ZB,\OB}$ according to $\pi$, and sends $X$ to Alice and $Y$ to Bob. Each player may reply with a string of arbitrary yet finite length over the respective answer alphabets in $G$, \ie, $\outA$ and $\outB$. The players win, \ie, $V_\infty^\delta(a,b|X,Y)=1$, if and only if their respective answer strings $a=(a_i)\in\outA^*$ and $b=(b_i)\in\outB^*$
have the same length and the fraction of positions $i$ for which $V(a_i,b_i|X,Y)=1$ is at least $\omega^*_{XY}-\frac{\delta}{2}$.
The above definition of $\Ginf$ depends on the probabilities $\omega^*_{XY}$ and hence one should rather write $\Ginf(\omega^*_{00},\omega^*_{10},\omega^*_{01},\omega^*_{11})$. For the sake of simplicity from now on we concentrate on the case where it is possible to choose a strategy for which the success probability does not depend on the question $(X,Y)$, \ie, $\omega^*_{XY} = \omega^*(G)-\frac{\delta}{2}$. 

Our results for Hardy's game in \Sect{Values} have straightforward extensions to the general case. More precisely, for any game $G$ with finite question and answer sets and any $\delta>0$, 
we have $\omega(\Ginf)=\omega(G)$,  $\omega^*(\Ginf)=1$ and the players can achieve success probability $1-\eps$ using entanglement of local dimension $\text{poly}\big(\frac{1}{\eps}\big)$.

\section{Finite strategies do not achieve the entangled value}
\label{sec:Finite}

The results in this section apply to Hardy's game as well as any game $\Ginf$ which is constructed from some game $G$ which has \emph{two questions} per player and finite answer sets $\outA$ and $\outB$ as explained in \Sect{General}. (We suspect that this restriction can be relaxed but did not investigate this further.) We show that for all small enough $\delta>0$ the entangled value cannot be attained by any finite-dimensional strategy. We also show that the same holds for strategies with bounded answer length (irrespective of the dimension of the entangled state they are based on).

Intuitively, the need for unbounded amount of entanglement to successfully play $\Ginf$ stems from the requirement that the answer strings in $\Ginf$ should on average 
achieve the value of at least $\omega^*(G)-\frac{\delta}{2}$.
In order to guarantee success the players are ``required'' to repeatedly sample from an entangled strategy for $G$, necessitating more and more entanglement as they improve their chances by exploiting concentration. 

At a slightly more formal level, our proof uses arguments that are reminiscent of techniques used to lower bound the amount of quantum communication required to send classical information from Alice to Bob in quantum one-way communication setting. We show that increasingly better quantum strategies for $\Ginf$ give rise to residual states (post-measurement states on Bob's side after Alice's measurement) that can be distinguished with increasing success probability. We obtain the desired bound on the dimension by observing that the latter can only happen if the states are of sufficiently high dimension.
The precise argument is slightly more involved as indeed success in the game does not imply any form of communication, and we cannot use the intuition directly. Details are given in the proof of Theorem~\ref{thm:Finite} in the next section.

\subsection{Strategies with bounded entanglement}

\begin{theorem}
Let $\eps>0$. Consider a strategy which succeeds at Hardy's game $H$ with probability at least $1-\eps$ and uses an entangled state of local dimensions $d_A$ and $d_B$ respectively. Then $\min(d_A,d_B)\geq 1/(16\sqrt{\eps})$. As a consequence, the game $H$ cannot be won with probability one using finite-dimensional entanglement.

Furthermore, a similar statement applies for any game $\Ginf$ with $0<\delta<\omega^*(G)-\omega(G)$, constructed from a two-player game $G$ with two questions per player and finite answer sets. In this case, $\min(d_A,d_B)=O\big(\frac{1}{\sqrt{\eps}}\big)$, where the implied constant depends on $\min_{XY}\pi(X,Y)$ only.\footnote{For games with two questions per player $\min_{XY}\pi(X,Y)$ is necessarily positive, unless $\omega^*(G)=\omega(G)$. To see this, suppose $G$ is such that \eg~$\pi(A',B')=0$. Fix an optimal quantum strategy. Consider a classical strategy in which each player, upon receiving the unprimed question, uses shared randomness to sample from the distribution produced by the quantum strategy upon questions $(A,B)$. Now, upon receiving the primed question a player knows that the other player received the unprimed question. Hence the player can use the same shared randomness to again simulate the behavior of the quantum players: for any value of the shared random string the player knows precisely which answer the other player will provide. Hence the classical strategy reproduces the quantum strategy perfectly.} 
\label{thm:Finite}
\end{theorem}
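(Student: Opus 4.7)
The plan is to argue that an $\eps$-approximate winning strategy for $H$ produces a large family of almost-pairwise-orthogonal subnormalized states on Bob's system, forcing $d_B = \Omega(1/\sqrt{\eps})$. By Naimark dilation I assume all POVMs are projective, and for each answer position $i \ge 1$ I introduce the bit-level projectors $\alpha_i := \sum_{a:|a|\ge i,\,a_i=0} A_a^A$ (and similarly $\alpha'_i,\beta_i,\beta'_i$), with bits past the answer's length treated as $1$. The four winning conditions of $H$ then translate, by summing or union-bounding over positions, into three per-position constraints (one for each of the three Hardy-like question pairs): for every $i$, $\bra{\psi}(\alpha_i \otimes \beta'_i)\ket{\psi} \le 4\eps$, $\bra{\psi}(\alpha'_i \otimes \beta_i)\ket{\psi} \le 4\eps$, and $\bra{\psi}((I-\alpha'_i) \otimes (I-\beta'_i))\ket{\psi} \le 4\eps$, while the $(A,B)$ condition yields the sum bound $\sum_i \bra{\psi}(\alpha_i \otimes \beta_i)\ket{\psi} \ge 1 - 4\eps$.

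The heart of the proof is a vector-level quantum version of Hardy's classical contradiction $a_i = 0 \Rightarrow b'_i = 1 \Rightarrow a'_i = 0 \Rightarrow b_i = 1$ (which contradicts $b_i = 0$). Each of the three per-position inequalities provides a Hilbert-space substitution of the form $\|(P \otimes Q)\ket{\psi}\| \le 2\sqrt{\eps}$: for instance, the first says that $(\alpha_i \otimes I)\ket{\psi}$ is within $2\sqrt{\eps}$ of $(\alpha_i \otimes (I-\beta'_i))\ket{\psi}$, and similar ``$a'_i \leftrightarrow 0$'' and ``$b_i \leftrightarrow 1$'' substitutions come from the other two constraints. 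Applied sequentially to $(\alpha_i \otimes \beta_i)\ket{\psi}$, the three substitutions produce, up to a cumulative $O(\sqrt{\eps})$ error, a vector whose operator support implements the classical contradiction at the Hilbert-space level. Because $\alpha_i$ and $\alpha'_i$ (and likewise $\beta_i,\beta'_i$) need not commute, the manipulations are performed directly on the state $\ket{\psi}$ rather than on operators, with careful tracking of which side each factor acts.

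For the dimension bound I consider the Bob-side subnormalized residual states $\sigma_i := \tr_A\bigl((\alpha_i \otimes \beta_i)\ketbras{\psi}\bigr)$, whose traces $\tr\sigma_i = \bra{\psi}(\alpha_i \otimes \beta_i)\ket{\psi}$ sum to at least $1-4\eps$. Applying the Hardy chain pairwise to $i \neq j$ yields a Hilbert--Schmidt overlap bound $\tr(\sigma_i \sigma_j) = O(\eps)$, expressing near-orthogonality inside Bob's $d_B$-dimensional space. A Cauchy--Schwarz step of the form $\bigl(\sum_i \tr\sigma_i\bigr)^2 \le d_B \sum_{i,j}\tr(\sigma_i \sigma_j)$ then yields $(1-4\eps)^2 \le d_B \cdot O(\eps)$, hence $d_B = \Omega(1/\sqrt{\eps})$; a symmetric argument on Alice's side bounds $d_A$. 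The extension to $\Ginf$ is then routine: the three per-position Hardy-like inequalities are replaced by the three per-position analogues of the three ``hard'' question pairs of $G$, and the hypothesis $\delta < \omega^*(G)-\omega(G)$ is precisely what makes these constraints nontrivial, with the resulting constant depending on $\min_{X,Y}\pi(X,Y)$ through the conversion between per-question and total error.

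The main obstacle I foresee is tightening the pairwise-overlap bound $\tr(\sigma_i\sigma_j) = O(\eps)$: a naive application of the Hardy chain loses factors polynomial in the number of positions, and obtaining the full $\sqrt{\eps}$ dependence will likely require exploiting the fact that the $\alpha_i$'s (resp.\ $\beta_i$'s) all arise from the single joint measurement $A$ (resp.\ $B$), so that the per-position chains can be compressed into a single effective substitution rather than being executed independently at each $i$.
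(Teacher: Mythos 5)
Your high-level plan---extract a family of nearly orthogonal residual states on Bob's side and convert near-orthogonality into a rank bound of the form $(\tr M)^2\le d_B\,\tr(M^2)$---is indeed the mechanism behind the paper's proof. But the decomposition you chose cannot deliver it, and the central claim of your argument is unsupported. You index the residual states $\sigma_i$ by answer \emph{positions}, with both indices attached to the unprimed question pair $(A,B)$, and assert $\tr(\sigma_i\sigma_j)=O(\eps)$ for $i\neq j$ via a ``quantum Hardy chain''. The chain at position $i$, however, only ever relates the four position-$i$ operators $\alpha_i,\beta_i,\alpha'_i,\beta'_i$ to one another; the game imposes no constraint whatsoever coupling position $i$ to position $j$. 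Nothing forbids a strategy from emitting identical bits at two positions, in which case $\sigma_i=\sigma_j$ and $\tr(\sigma_i\sigma_j)=\tr(\sigma_i^2)$, so any bound on it would have to come from the mixedness of a single $\sigma_i$---which the per-position constraints do not control. Note also that $\tr\sigma_i=\Pr[a_i=b_i=0]$ equals the \emph{constant} $p_\theta\approx 0.09$ for the honest strategy of \Lem{Strategy}, so the chain certainly cannot show $\bra{\psi}(\alpha_i\otimes\beta_i)\ket{\psi}=O(\sqrt{\eps})$; the reason it breaks is that Hardy's implications are one-directional (they give substitutions like $(\alpha_i\otimes\id)\ket{\psi}\approx(\alpha_i\otimes\gamma'_i)\ket{\psi}$ rather than transfers of an operator across the tensor product), so successive steps stack up non-commuting Bob-side projectors and do not compose. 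Even granting the pairwise bound, your final Cauchy--Schwarz does not close: the off-diagonal sum is $O(\eps)$ times the square of the number of positions, which is unbounded (as you acknowledge), and the diagonal terms $\tr(\sigma_i^2)$ are left uncontrolled. A smaller but real error: your encoding $(I-\alpha'_i)\otimes(I-\beta'_i)$ of the $(A',B')$ constraint includes the non-losing event that both answers are shorter than $i$, so that inequality is false as stated.

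The paper's proof avoids all of this by comparing residual states across Alice's \emph{two different questions}, indexed by whole answer strings: $\rho_s$ for outcome $s$ on question $A$ and $\rho'_t$ for outcome $t$ on $A'$, so that $\sum_s\rho_s=\sum_t\rho'_t=\rho$ is the full reduced state and $1/d\le\tr(\rho^2)=\sum_{s,t}\tr(\rho_s\rho'_t)$. The combinatorial heart is \Lem{discriminate}: for \emph{every} pair $(s,t)$, one of Bob's two questions has the property that each of his answers is rejected either against $s$ or against $t$; this is proved by extracting, from any violating quadruple of strings, a classical strategy for $G$ beating $\omega(G)$---which is where the hypothesis $\delta<\omega^*(G)-\omega(G)$ enters, rather than through per-position Hardy implications. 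Each cross term $\tr(\rho_s\rho'_t)$ is then bounded via Cauchy--Schwarz against the corresponding sums of Bob's POVM elements (with an inclusion--exclusion term $R'_{s,t}$ handling double counting), and the sum over all $(s,t)$ collapses into the four total error terms \EqRef{succ-1}--\EqRef{succ-4} instead of a count of positions. If you want to rescue your argument, the ``compression'' you allude to at the end effectively forces you back to this string-level, cross-question decomposition; the game simply provides no reason for residuals at distinct positions of the \emph{same} question pair to be distinguishable.
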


Below we give a proof of the ``furthermore'' part of \Thm{Finite} which applies for the games $\Ginf$. Although, strictly speaking, Hardy's game does not fall in this category, the proof is in all points similar. Before proceeding we give a key lemma with the following informal interpretation. Consider any pair of answers $(s,t)$ associated with Alice's questions $A$ and $A'$ respectively in $\Ginf$. 
Then we can find a question $Y\in\set{B,B'}$ for Bob such that for any $u\in\outB^*$ either the answer pair $(s,u)$ is rejected upon question $(A,Y)$ or the answer pair $(t,u)$ is rejected upon question $(A',Y)$.
To state the lemma, let us define the following sets. For any $s,t\in\outA^*$ let
\begin{equation}\label{eq:def-bs}
  \ZZ_s := \{u\in\outB^*:\,  
  \text{ $\Vinf(s,u|\ZA,\ZB)=0$, \ie, $(s,u)$ is rejected upon questions $(\ZA,\ZB)$}\}
\end{equation}
and let $\OZ_t$ be the set of answers $u$ for Bob such that $(t,u)$ is rejected upon question $(\OA,\ZB)$. Similarly define $\ZO_s$ and $\OO_t$ as the sets of Bob's answers corresponding to rejected answer pairs for questions $(\ZA,\OB)$ and $(\OA,\OB)$ respectively.\footnote{The letters $\mathcal{P}$ and $\mathcal{U}$ are meant to stand for the primed and unprimed questions for Alice.}

\begin{lemma}\label{lem:discriminate}
For any $(s,t)\in (\outA^* \times \outA^*)$ it holds that either $\ZO_s \cup \OO_t = \outB^*$ or $\ZZ_s \cup \OZ_t = \outB^*$.  
\end{lemma}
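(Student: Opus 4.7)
The plan is to argue by contradiction: assuming that neither union covers $\outB^*$, pick witnesses $u_1 \in \outB^* \setminus (\ZO_s \cup \OO_t)$ and $u_2 \in \outB^* \setminus (\ZZ_s \cup \OZ_t)$, and use the four strings $s,t,u_1,u_2$ to build a classical strategy for $G$ whose success probability strictly exceeds $\omega(G)$. This will contradict the standing assumption $\delta < \omega^*(G) - \omega(G)$.

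First I would dispose of the length issue. If $|s| \neq |t|$, then for any $u \in \outB^*$ the length condition built into $\Vinf$ already forces either $u \in \ZZ_s$ (whenever $|u| \neq |s|$) or $u \in \OZ_t$ (whenever $|u| \neq |t|$), so that $\ZZ_s \cup \OZ_t = \outB^*$ holds trivially. So assume $|s| = |t| =: n$. The non-membership conditions on $u_1, u_2$ further force
\begin{equation}
\Vinf(s, u_1 \mid \ZA, \OB) = \Vinf(t, u_1 \mid \OA, \OB) = \Vinf(s, u_2 \mid \ZA, \ZB) = \Vinf(t, u_2 \mid \OA, \ZB) = 1,
\end{equation}
which in particular imposes $|u_1| = |u_2| = n$.

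Next I would define a classical strategy $(\alpha, \beta)$ for $G$ with shared randomness $i$ uniform on $[n]$ by
\begin{equation}
\alpha(\ZA, i) := s_i, \quad \alpha(\OA, i) := t_i, \quad \beta(\ZB, i) := u_{2,i}, \quad \beta(\OB, i) := u_{1,i},
\end{equation}
where $s_i$, $t_i$, $u_{1,i}$, $u_{2,i}$ denote the $i$-th letters of the respective strings. For each $(X,Y) \in \{\ZA,\OA\} \times \{\ZB, \OB\}$, the probability of winning question $(X,Y)$ under this strategy equals the empirical fraction of indices $i \in [n]$ at which the corresponding answer pair satisfies $V$ in the underlying game $G$. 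By the definition of $\Vinf$ under our simplifying assumption $\omega^*_{XY} = \omega^*(G) - \tfrac{\delta}{2}$, each of those four fractions is at least $\omega^*(G) - \delta$. Averaging over $\pi(X,Y)$ gives
\begin{equation}
\omega(G \mid \alpha, \beta) \geq \omega^*(G) - \delta > \omega(G),
\end{equation}
where the final strict inequality uses $\delta < \omega^*(G) - \omega(G)$. This contradicts the definition of $\omega(G)$, so at least one of the two unions equals $\outB^*$.

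The main point requiring care is the bookkeeping that forces all four strings $s, t, u_1, u_2$ to have the same common length $n$, so that a single shared-randomness index $i \in [n]$ simultaneously yields valid answers for every question pair. Once that is in place, the argument reduces to the standard idea of interpreting a winning answer string in $\Ginf$ as a distribution over strategies for $G$ sampled via shared randomness.
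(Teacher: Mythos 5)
Your proof is correct and follows essentially the same route as the paper's: assume neither union covers $\outB^*$, extract two witness strings for Bob, and convert the four mutually accepted answer strings into a shared-randomness classical strategy for $G$ with value at least $\omega^*(G)-\delta>\omega(G)$, a contradiction. Your explicit handling of the length bookkeeping is a minor elaboration of a point the paper dispatches with a parenthetical remark.
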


\begin{proof}
For contradiction suppose that there exists a pair of answer strings $(s,t)\in(\outA^* \times \outA^*)$ such that both $\ZO_s \cup \OO_t \subsetneq \outB^*$ and $\ZZ_s \cup \OZ_t \subsetneq \outB^*$. Hence there exists a pair of strings of answers $(u,v)\in(\outB^* \times \outB^*)$ for Bob such that $u\notin \ZZ_s \cup \OZ_t$ and $v\notin \ZO_s \cup \OO_t$. By definition, this means that the pair of answers $(s,u)$, $(s,v)$, $(t,u)$ and $(t,v)$ are accepted by the verifier in the game $\Ginf$, on questions $(\ZA,\ZB)$, $(\ZA,\OB)$, $(\OA,\ZB)$ and $(\OA,\OB)$ respectively. We now derive a contradiction by defining a classical strategy for the game $G$ that has success probability strictly larger than $\omega(G)$. The strategy is simple: upon receiving their respective questions, the players use shared randomness to select an integer $i\in \{1,\ldots,|s|\}$ (note that necessarily $|s|=|t|=|u|=|v|$) and answer the $i$-th symbol in their strings $s$ or $t$ for Alice (depending on whether her question was $\ZA$ or $\OA$), and $u$ or $v$ for Bob (depending on whether his question was $\ZB$ or $\OB$). By definition of $V_\infty^\delta$, their success probability is at least $\omega^*(G)-\delta>\omega(G)$, a contradiction.
\end{proof}

Based on \Lem{discriminate} one can already give a short proof of the fact that no finite strategy can achieve success probability $\omega^*(\Ginf)=1$ for any of the considered games $\Ginf$. 
To see this, assume that a perfect finite strategy exists and let $\rho_s$ be Bob's residual state in the case when Alice has responded with string $s$ upon question $A$ and let $\rho'_t$ be his state in the case when Alice has responded $t$ upon question $A'$. Note that if $\ZZ_s \cup \OZ_t = \outB^*$ then any answer of Bob's which is accepted upon question $(A,B)$ and Alice answering $s$ would be rejected upon question $(A',B)$ and Alice answering $t$. Hence, in order to win Bob must be able to perfectly distinguish state $\rho_s$ from $\rho_t$, \ie, $\rho_s \perp \rho_t$. \Lem{discriminate} allows to conclude that $\rho_s\perp \rho_t$ for all $s,t$ which is a contradiction since $\sum_s \rho_s = \sum_t \rho_t$ cannot be the zero matrix.

To prove the dimension estimate stated in \Thm{Finite} we need a more involved quantitative argument which we are about to give. Nevertheless, the underlying idea remains the same.

\begin{proof}[Proof of \Thm{Finite}]
Fix $\eps>0$, and assume that there exists a quantum strategy for winning $\Ginf$ with probability $1-\eps$ using $\ket{\psi}\in\bips{d_A}{d_B}$ as the shared entanglement. Without loss of generality we can assume that $\ket{\psi}$ has full Schmidt rank and  $d_A=d_B=:d$. Next, let $\set[\big]{A_s : s\in\outA^*}$ be the POVM measured by Alice upon question $\ca$. Define POVM $\set[\big]{A'_t : t\in\outA^*}$ similarly, as well as POVMs $\set[\big]{B_u : u\in\outB^*}$ for Bob on question $B$ and  $\set[\big]{B'_v : v\in\outB^*}$ on question $B'$. 
Also, for any $s,t\in\outA^*$, define the following post-measurement states on Bob's system
\begin{equation}
  \rho_s := \tr_A\big((A_s \otimes \id)\ketbras{\psi}\big) 
  \;\;\text{and}\;\;
  \rho'_{t} := \tr_A\big((A'_t \otimes \id)\ketbras{\psi}\big),
\end{equation}
and let $\rho := \sum_s \rho^{\phantom{i}}_s = \sum_t \rho'_t=\tr_A(\ket{\psi}\bra{\psi})$.
Recall the definition of the set $\mathcal{U}'_s$ in \Eq{def-bs}, and define $U'_s := \sum_{v\in\mathcal{U}'_s} B_v$. Similarly define $U_s := \sum_{u\in\mathcal{U}_s} B_u$, $P'_t:= \sum_{v\in\mathcal{P}'_t} B_v$ and $P_t := \sum_{u\in\mathcal{P}_t} B_u$. 
Our assumption that the strategy succeeds with probability $1-\eps$ implies that the following four relations must hold:
\begin{align}
 \pi(A,B) \sum_{s}\, \tr(\rho_s \,U_s ) &\leq \eps\label{eq:succ-1}\\
 \pi(A,B') \sum_{s} \,\tr(\rho_s \,U'_s ) &\leq \eps\label{eq:succ-2} \\
 \pi(A',B) \sum_{t} \,\tr(\rho'_t \,P_t ) &\leq \eps\label{eq:succ-3}\\
 \pi(A',B') \sum_{t} \,\tr(\rho'_t \,P'_t ) &\leq \eps\label{eq:succ-4}.
\end{align}
To proceed we define the following sets:
\begin{align*}
  S' &:= \set[\big]{(s,t)\in (\outA^*\times \outA^*) :\, 
  \mathcal{U}'_s \cup \mathcal{P}'_t = \outB^*}
  \\
  S &:= \set[\big]{(s,t)\in(\outA^*\times \outA^*) \setminus S' :\, 
  \mathcal{U}_s \cup \mathcal{P}_t = \outB^*}.
\end{align*}
By the definition of the two sets and \Lem{discriminate}, the set $\outA^*\times\outA^*$ is the disjoint union of $S$ and $S'$.
We derive a lower bound on $d$ by studying the following quantity:
\begin{equation}\label{eq:def-rho}
\tr(\rho^2)\,=\,\sum_{s,t\in\outA^*} \tr(\rho^{\phantom{i}}_s\rho'_t)\,=\,  
  \sum_{(s,t)\in S'} \tr(\rho^{\phantom{i}}_s\rho'_t)+
  \sum_{(s,t)\in S} \tr(\rho^{\phantom{i}}_s\rho'_t).
\end{equation}
Since $\rho$ is a $d$-dimensional density matrix it holds that $\tr(\rho^2)\geq 1/d$. We now upper bound each of the two terms on the right-hand side of \Eq{def-rho} as a function of $\eps$. The two bounds are similar; we analyze the summation corresponding to pairs $(s,t)\in S'$. 

By the definition of the set $S'$, we have that $\mathcal{U}'_s \cup \mathcal{P}'_t = \outB^*$ for any $(s,t)\in S'$. Hence we can decompose
\begin{equation}\label{eq:double-count}
\id \,=\, U'_s+P'_t-R'_{s,t},
\end{equation}
where the term $R'_{s,t}:= \sum_{v\in\mathcal{U}'_s\cap \mathcal{P}'_t} B'_v$ takes care of double counting.
Write
\begin{align}
\sum_{(s,t)\in S'} \tr(\rho_s\rho'_t) &= \sum_{(s,t)\in S'}\tr(\rho_s(U'_s+P'_t-R'_{s,t})\rho'_t)\notag\\
&=  \sum_{(s,t)\in S'}\tr(\rho_s U'_s\rho'_t) +  \sum_{(s,t)\in S'}\tr(\rho_s P'_t\rho'_t) -\sum_{(s,t)\in S'}\tr(\rho_s R'_{s,t}\rho'_t).\label{eq:s0-1}
\end{align}
We bound each of the three terms from \Eq{s0-1} separately. The first two can be bounded in a similar way. Specifically, for the first term, if we define $\sigma_s := \sum_{t:\,(s,t)\in S'} \rho'_t$ then
\begin{align*}
  \sum_{(s,t)\in S'} \tr(\rho_s U'_s\rho'_t)
  &\leq \sum_{s} \abs[\big]{\tr(\rho_s U'_s\sigma_s)}
\\
  &\leq \sum_{s} \Big( \tr(\rho_s U'_s\sigma_s^2 U'_s) \Big)^{1/2}
  \Big(\tr(\rho_s)\Big)^{1/2}
\\
  &\leq \Big(\sum_s \tr(\rho_s U'_s\sigma_s^2 U'_s)\Big)^{1/2}
  \Big(\sum_s \tr(\rho_s)\Big)^{1/2} 
\\
  &\leq  \Big(\sum_s \tr(\rho_s U'_s)\Big)^{1/2}
\\
  &\leq \sqrt{\eps/\pi(A,B')},
\end{align*}
where the second inequality uses the Cauchy-Schwarz inequality for the trace inner product (using cyclicity of the trace to write $\tr(\rho_s U'_s\sigma_s) = \tr(\sqrt{\rho_s} U'_s\sigma_s\cdot \sqrt{\rho_s})$), the fourth uses $\sigma_s^2 \leq \id$, $(U'_s)^2\leq U'_s$, and $\tr(\rho)\leq 1$, and the last is by~\eqref{eq:succ-2}. The same bound can be derived for the second term in \Eq{s0-1}, this time using~\eqref{eq:succ-4}. Note that the last sum in \Eq{s0-1} may be negative, and we need to bound it as well. We proceed as follows: 
\begin{align*}
  \Big|\sum_{(s,t)\in S_1}\tr(\rho_s R'_{s,t}\rho'_t) \Big|
  &\leq \Big(\sum_{(s,t)\in S'}\tr(\rho_s R'_{s,t}\rho'_t R'_{s,t}) \Big)^{1/2} \Big(\sum_{(s,t)\in S_1}\tr(\rho_s \rho'_t ) \Big)^{1/2}
\\
  &\leq \Big(\sum_{(s,t)\in S_1}\tr
  \big( \sqrt{R'_{s,t}}\rho_s \sqrt{R'_{s,t}}\big)\tr
  \big(\sqrt{R'_{s,t}}\rho'_t \sqrt{R'_{s,t}}\big) \Big)^{1/2} 
  \big(\tr(\rho^2)\big)^{1/2}
\\
  &\leq\Big(\sum_{(s,t)\in S_1} \tr( \rho_s U'_{s})
  \tr(\rho'_t P'_t) \Big)^{1/2} 
\\
  &\leq\Big(\Big(\sum_{s} \tr( \rho_s U'_{s})\Big)^{1/2}
  \Big(\sum_t \tr(\rho'_t P'_t) \Big)\Big)^{1/2}
\\
  &\leq \eps/\sqrt{\pi(A,B')\pi(A',B')},
\end{align*}
where the first inequality uses the Cauchy-Schwarz inequality for the trace inner product (using cyclicity of the trace to write $\tr(\rho_s R'_{s,t}\rho'_t) = \tr(\rho_s^{1/2} R'_{s,t}(\rho'_t)^{1/2}\cdot (\rho'_t)^{1/2}\rho_s^{1/2})$), the second uses the inequality $\tr(XY)\leq\tr(X)\tr(Y)$ valid for all positive $X,Y$, and the fact that $\tr(\rho_s\rho'_t)\geq 0$ for every $(s,t)$, the third uses cyclicity of the trace and $R'_{s,t}\leq U'_s$, $R'_{s,t}\leq P'_t$, the fourth uses that all terms are non-negative to upper bound the summation over $(s,t)\in S'$ by a summation over all $s,t$, and the last uses~\eqref{eq:succ-2} and~\eqref{eq:succ-4}. 

Putting everything together, we have shown that $\big|\sum_{(s,t)\in S'} \tr(\rho_s\rho_t)\big| = O(\sqrt{\eps})$, where the implied constant depends only on $\min_{XY}\pi(X,Y)$. Proceeding in a similar way, the same bound can be obtained for the summations over pairs $(s,t)\in S$. From \Eq{def-rho} we then obtain
$$ \frac{1}{d} \,\leq\, \sum_{s,t} \tr(\rho_s\rho'_t) \,=\, O\big(\sqrt{\eps}\big),$$
proving the bound claimed in the theorem. 
\end{proof}

\subsection{Strategies with bounded number of answers}

\begin{theorem}
Let $G$ be a two-player game with two questions per player and finite answer sets $\outA$ and $\outB$. 
Also let $\eps>0$ and $0<\delta<\omega^*(G)-\omega(G)$.
Consider an entangled strategy that succeeds at $\Ginf$ with probability at least $1-\eps$ and for which either Alice's or Bob's answers always fall within a set of cardinality~$L$.
Then~$L=\Omega(1/\sqrt{\eps})$, where the implied constant depends on $\min_{XY}\pi(X,Y)$ only. 
\label{thm:Answers}
\end{theorem}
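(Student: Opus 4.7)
The plan is to mirror the proof of \Thm{Finite}, modifying only the final step. The entire chain of Cauchy--Schwarz estimates leading to
\[
  \tr(\rho^2) \,=\, \sum_{s,t} \tr(\rho_s \rho'_t) \,=\, O\!\big(\sqrt{\eps}\big),
\]
via \Lem{discriminate}, the identity decomposition $\id = U'_s + P'_t - R'_{s,t}$ on $S'$ (and its analogue on $S$), and the success constraints \eqref{eq:succ-1}--\eqref{eq:succ-4}, uses only the success probability and the combinatorial structure of $\Ginf$---not the dimension of the shared state or the size of Bob's answer set. I would carry out this portion verbatim, with the outer sums over $s,t$ now ranging over the at most $L$ distinct answer strings Alice can produce.

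The new ingredient is a lower bound on $\tr(\rho^2)$ in terms of $L$ rather than $d$. I would aim for
\[
  \tr(\rho^2) \,\geq\, \sum_s \tr(\rho_s^2) \,=\, \sum_s p_s^2 \,\geq\, \frac{1}{L},
\]
where $p_s = \tr(\rho_s)$, the first inequality is valid because the off-diagonal terms $\tr(\rho_s \rho_{s'})$ with $s \neq s'$ are non-negative, the middle equality uses that $\rho_s = \tr_A((A_s \otimes \id)\ketbras{\psi})$ is rank-one (so $\tr(\rho_s^2) = (\tr\rho_s)^2 = p_s^2$), and the final inequality is Cauchy--Schwarz applied to $\sum_s p_s = 1$ over $L$ summands. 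To ensure the $\rho_s$ are rank-one, I would reduce (without loss of generality) to a strategy whose shared state has Schmidt rank at most $L$ and whose Alice POVMs $\{A_s\}, \{A'_t\}$ are rank-one with $L$ outcomes each; the convex set of $L$-outcome POVMs on a space of dimension $\leq L$ admits rank-one extreme points, and the success probability is affine in the POVM, so passing to an extremal strategy in each of Alice's POVMs preserves both the answer set cardinality and the success probability bound. Combining the lower bound with the upper bound then yields $L = \Omega(1/\sqrt{\eps})$.

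The main obstacle I anticipate is making this reduction precise: I need to show that any entangled strategy for $\Ginf$ in which Alice uses at most $L$ distinct answer strings can be simulated, without loss of success probability and without enlarging the answer set, by one whose shared state has Schmidt rank at most $L$. Intuitively this should hold because Bob's residual states $\{\rho_s\}_s$ span an at most $L$-dimensional subspace of Hermitian operators on $\mathcal{H}_B$, and the strategy's observable behaviour with respect to Bob's POVMs is captured by the $\{\rho_s\}_s$ (and analogously for the primed question). Turning this intuition into a rigorous reduction---in particular ensuring that the two Alice POVMs $\{A_s\}$ and $\{A'_t\}$ can be simultaneously realised on a compressed Hilbert space of dimension at most $L$---is the most delicate step, and if direct compression is not possible one may instead pass to the rank-one spectral refinement of Alice's POVMs and verify that the Cauchy--Schwarz step in the final bound still exploits only the number $L$ of distinct answer strings (rather than the larger number of refined POVM outcomes).
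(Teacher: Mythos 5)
Your first step---rerunning the argument of \Thm{Finite} to obtain $\sum_{s,t}\tr(\rho_s\rho'_t)=O(\sqrt{\eps})$---is sound, but the second step fails, and the reduction you flag as ``delicate'' is in fact impossible. The inequality $\tr(\rho^2)\geq 1/L$ is simply false for strategies satisfying the hypotheses of the theorem: take any good strategy for $\Ginf$ in which Alice uses $L$ answers and tensor on an unused maximally entangled state of dimension $D$. The hypotheses are unaffected, Alice's POVM elements become $A_s\otimes\id$ (far from rank one), each $\rho_s$ acquires rank at least $D$, and $\tr(\rho^2)\leq 1/D$ can be made arbitrarily small while $L$ stays fixed. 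So any proof along your lines would need a genuine compression theorem (``if Alice has $L$ answers then WLOG the Schmidt rank is at most $L$''), which is not known and is essentially the kind of statement this whole line of work is trying to understand. Your extremality argument does not supply it: extreme points of the set of $L$-outcome POVMs need not have rank-one elements (any projective measurement is extremal), and affinity of the success probability only lets you pass to \emph{some} extreme point, not a rank-one one. The fallback of spectrally refining $A_s$ into rank-one pieces destroys the count: you then obtain $\tr(\rho^2)\geq 1/(\text{number of refined outcomes})$, which can be as weak as $1/(Ld)$, not $1/L$.

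The paper's proof avoids $\tr(\rho^2)$ entirely and is dimension-free. For each answer $s$ that Alice can give on question $A$, it uses \Lem{discriminate} to split Alice's POVM for question $A'$ into two parts $P'+Q'=\id$, according to whether the corresponding answer $t$ satisfies $\ZO_s\cup\OO_t=\outB^*$ or $\ZZ_s\cup\OZ_t=\outB^*$. Conditioning on Bob's outcome lying outside $\ZZ_s$ (resp.\ outside $\ZO_s$) yields states $\sigma_0$ (resp.\ $\sigma_1'$) that agree with $\rho$ on the support of $A_s$ up to $O(\eps)$, and the success conditions force $\tr(A_s\sigma_0)\leq\tr(A_sP'\rho)+O(\sqrt{\eps})$ and $\tr(A_s\sigma_1')\leq\tr(A_sQ'\rho)+O(\sqrt{\eps})$. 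Adding these and using $\tr(A_sP'\rho)+\tr(A_sQ'\rho)=\tr(\rho_s)$ gives $2\tr(\rho_s)\leq\tr(\rho_s)+O(\sqrt{\eps})$, i.e.\ $\tr(\rho_s)=O(\sqrt{\eps})$ for every individual answer $s$; summing over the at most $L$ answers gives $1\leq O(\sqrt{\eps})\,L$. An argument of this per-answer type is what you need; the global $\tr(\rho^2)$ route cannot be repaired.
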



\begin{proof}
Let $L\in\N$ and fix a string $s\in\outA^*$ of length at most $\log(L)$ that is answered by Alice on question $A$. Let $\sigma_0$ be the post-measurement state on Alice's system that corresponds to Bob applying his POVM $\set{B_u}$ upon question $B$ and obtaining an answer string $u\notin \ZZ_s$, that is, 
\begin{equation}
  \sigma_0:=\sum_{u\notin\ZZ_s} 
  \tr_A\big((\id\otimes B_u) \ketbras{\psi}\big).
\end{equation}
If the questions are $(A,B)$ and Alice obtains $s$ as outcome, then by definition of the set $\ZZ_s$ all other answers for Bob are rejected in the game, hence 
\begin{equation}\label{eq:rho-as}
\tr\big(A_s(\rho-\sigma_0)\big)\,=\, \tr(\rho_s)-\tr(A_s\sigma_0) \,\leq\,\eps/\pi(A,B),
\end{equation}
where $\set{A_s: s\in\outA^*}$ is the POVM measured by Alice upon the unprimed question $A$.
Similarly, let $\sigma_1'$ be the post-measurement state on Alice's system that corresponds to Bob applying his POVM for question $B'$ and obtaining an answer string $v\notin\ZO_s$. By the same reasoning,
\begin{equation}\label{eq:rho-bs}
\tr\big(A_s(\rho-\sigma'_1)\big)\,=\, \tr(\rho_s)-\tr(A_s\sigma_1')\,\leq\,\eps/\pi(A,B').
\end{equation}

Let $P'$ be the sum of those POVM elements for Alice's question $A'$ that correspond to answers $t$ such that $\ZO_s \cup \OO_t = \outB^*$, and
$Q' = \id-P'$. By \Lem{discriminate}, every POVM element in $Q'$ corresponds to an answer $t$ such that $\ZZ_s \cup \OZ_t = \outB^*$. On questions $(A',B)$, if Alice obtains one of the outcomes $t$ corresponding to $P'$ and Bob obtains an outcome $u\notin\ZO_s$ then it must be that $u\in \OZ_t$ and hence $\tr(P'\sigma_1') \leq \eps/\pi(A',B)$. Similarly, on questions $(A',B')$ if  if Alice obtains one of the outcomes $t$ corresponding to $Q'$ and Bob obtains an outcome $u\notin \ZZ_s$ then $u\in\OZ_t$ and hence $\tr(Q'\sigma_0 ) \leq \eps/\pi(A',B')$. We have proven:
\begin{equation}\label{eq:rho-cs}
\tr(Q'\sigma_0 ) \leq \eps/\pi(A',B')\qquad\text{and}\qquad\tr(P'\sigma_1') \leq \eps/\pi(A',B).
\end{equation}
Using $P'+Q'=\id$ we may decompose
\begin{align}
\tr(A_s \sigma_0) &= \tr( A_s  P' \sigma_0) + \tr(A_s  Q' \sigma_0)\notag\\
&\leq  \tr( A_s  P' \sigma_0) + \tr(A_s Q' A_s \sigma_0)^{1/2}\tr(  Q' \sigma_0)^{1/2}\notag\\
&\leq \tr( A_s  P' \sigma_0) +\sqrt{\eps/\pi(A',B')}\notag\\
&= \tr( A_s  P' \rho) - \tr( A_s  P' (\rho-\sigma_0))+\sqrt{\eps/\pi(A',B')}\notag\\
&\leq \tr( A_s  P' \rho)+  \tr(P' A_s P' (\rho-\sigma_0))^{1/2} \tr( A_s  (\rho-\sigma_0))^{1/2}+\sqrt{\eps/\pi(A',B')}\notag\\
&\leq \tr( A_s  P' \rho)+O\big(\sqrt{\eps}\big),\label{eq:rho-bs-1}
\end{align} 
where the first inequality is Cauchy-Schwarz, the second uses $A_s Q' A_s \leq \id$ and~\eqref{eq:rho-cs}, the third inequality is Cauchy-Schwarz again, and the last uses $P'A_sP'\leq \id$ and~\eqref{eq:rho-as}. A similar chain of inequalities leads to the bound 
\begin{equation}\label{eq:rho-bs-2}
\tr(A_s \sigma_1') \leq \tr( A_s  Q' \rho) + O\big(\sqrt{\eps}\big).
\end{equation}
But 
$$ \tr( A_s  P' \rho) + \tr(A_s  Q' \rho) = \tr( A_s \rho)=\tr(\rho_s),$$
which combined with~\eqref{eq:rho-bs-1},~\eqref{eq:rho-bs-2} and~\eqref{eq:rho-as},~\eqref{eq:rho-bs} yields 
$$2\tr(\rho_s)\,\leq\, \tr(A_s \sigma_0) + \tr(A_s\sigma_1') + O\big(\sqrt{\eps}\big)\, \leq\, \tr(\rho_s) + O\big(\sqrt{\eps}\big),$$
implying that $\tr(\rho_s) = O\big(\sqrt{\eps}\big)$. Summing over $s$ we deduce $1\leq O(\sqrt{\eps})L$, proving the theorem.   
\end{proof}

We can use Hardy's game, or any of the games $\Ginf$, to certify the dimension of an entangled state. According to \Thm{Finite}, if Alice and Bob succeed with high probability, then they must possess an entangled state of high local dimension. In practice it is not reasonable to test entanglement via a scheme involving infinitely many outcomes. Therefore, we consider games $\Ginf(\ell)$ that are obtained from $\Ginf$ by limiting the parties to answer strings of length at most $\ell\in \N$. For an appropriately chosen $\ell=\Theta(\log 1/\eps)$, the game $\Ginf(\ell)$ has two questions and $\text{poly}(1/\eps)$ answers per player; moreover \Lem{Strategy} shows that $\omega^*\big(\Ginf(\ell)\big)\ge 1-\eps$. Now, as a direct consequence of \Thm{Finite} we obtain the following corollary.

\begin{cor}
Let $G$ be a two-player game with two questions per player and finite answer sets $\outA$ and $\outB$. 
Also let $\eps>0$, $0<\delta<\omega^*(G)-\omega(G)$
and  $\ell=\Theta(\log 1/\eps)$. Any strategy that succeeds at $\Ginf(\ell)$ with probability at least $\omega^*\big(\Ginf(\ell)\big)-\eps$ must use a state of local dimension at least $\Omega(1/\sqrt{\eps})$.
\label{cor:Witness}
\end{cor}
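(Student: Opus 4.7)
The plan is to obtain \Cor{Witness} as a simple two-step reduction to \Thm{Finite}. The first ingredient is a lower bound on $\omega^*(\Ginf(\ell))$: by \Lem{Strategy} in the Hardy case, and the "straightforward extensions" mentioned at the end of \Sect{General} for arbitrary $G$, there exists a quantum strategy for $\Ginf$ that succeeds with probability at least $1-\eps$ using answer strings of length $\ell=\Theta(\log 1/\eps)$. Since this strategy only uses answers of length at most $\ell$, it is a valid strategy for the restricted game $\Ginf(\ell)$, so $\omega^*(\Ginf(\ell))\geq 1-\eps$.

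The second ingredient is the opposite embedding: any strategy for $\Ginf(\ell)$ can be regarded as a strategy for $\Ginf$ without change, because the verification predicate $V_\infty^\delta$ is identical on the common (restricted) answer domain, and a strategy for $\Ginf(\ell)$ simply never produces answers outside that domain. Consequently, if a strategy succeeds at $\Ginf(\ell)$ with probability at least $\omega^*(\Ginf(\ell))-\eps$, then, combined with the previous paragraph, it succeeds at $\Ginf$ with probability at least $1-2\eps$.

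Now \Thm{Finite} applies directly: any such strategy for $\Ginf$ must use an entangled state of local dimension $\Omega(1/\sqrt{2\eps})=\Omega(1/\sqrt{\eps})$, with the implied constant depending on $\min_{XY}\pi(X,Y)$ only. This is precisely the conclusion of the corollary. The only non-routine point is to confirm that the strategy certified by \Lem{Strategy} really uses answers of bounded length $\ell=\Theta(\log 1/\eps)$ (so that it descends to $\Ginf(\ell)$); this is immediate from the explicit tensor-product construction, where each of the $n=\lceil\log\eps/\log c\rceil$ copies of $\ket\psi$ contributes a single bit to each player's answer, yielding answer strings of length exactly $n=\Theta(\log 1/\eps)$.
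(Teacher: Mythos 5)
Your proposal is correct and matches the paper's (implicitly sketched) argument: the paper likewise establishes $\omega^*\big(\Ginf(\ell)\big)\ge 1-\eps$ via the explicit bounded-length strategy of \Lem{Strategy} and then invokes \Thm{Finite} directly, using exactly the embedding of $\Ginf(\ell)$-strategies into $\Ginf$-strategies that you describe. No gaps.
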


\section{Discussion and Open Problems}
\label{sec:Conclusion}

By exhibiting Hardy's game we have shown that there exist two-party distributed tasks with classical questions and answers for which no finite amount of entanglement allows the parties to perform optimally. We have also exhibited an infinite sequence of strategies, using entanglement of increasing dimension, which obtain success probabilities that tend to the optimal $\omega^*(H)=1$. Each strategy in our sequence produces answers of fixed length, increasing with the dimension of the entangled state. We have showed that to some extent this is necessary:  strategies producing answers with bounded length succeed with probabilities that are bounded away from one. Finally, we have shown that games similar to Hardy's can be constructed out of any game $G$ with two questions per player, finite answer sets and $\omega(G)<\omega^*(G)$.

The most pressing question left open by our work is whether a finite-dimensional shared state is always sufficient to achieve $\omega^*(G)$ for games with finite question and answer sets. As already mentioned in the introduction, there are numerical results which suggest that this is not always the case \cite{Pal10}. It is also interesting to investigate what  kind of trade-offs may be required in terms of the dimension of strategies which approach the optimum. In particular, are there any general upper bounds on the dimension $d$ sufficient to achieve value $\omega^*(G)-\eps$?

\paragraph{Acknowledgments.}
The first author thanks the anonymous referees of QIP 2014 for pointing out a mistake in an earlier proof of the non-quantitative part of \Thm{Finite}, Aleksandrs Belovs for suggesting the proof of \Lem{InfiniteInput} as well Richard Cleve, David Roberson, Stephanie Wehner, Harry Buhrman and his group at CWI for helpful discussions. Both authors are grateful to Oded Regev who prompted us to generalize our initial results, which were restricted to Hardy's game, by suggesting that the same construction might also apply for the CHSH game. Both authors are supported by the Ministry of Education, Singapore under the Tier 3 grant MOE2012-T3-1-009. Thomas Vidick is also supported by the Newton Institute, Cambridge. Parts of this work was completed during a workshop at the Institute for Mathematical Sciences, Singapore.

\bibliographystyle{alphaurl}
\newcommand{\etalchar}[1]{$^{#1}$}

\end{document}